\documentclass[12pt]{article}
\usepackage[T1]{fontenc}
\usepackage[utf8]{inputenc}

\usepackage[round]{natbib}
\usepackage{multirow}
\usepackage[table,xcdraw]{xcolor}
\usepackage{rotating, pdflscape}
\usepackage{setspace}
\usepackage[margin=1in]{geometry}
\PassOptionsToPackage{hyphens}{url}
\usepackage[colorlinks=true, linkcolor=blue!70!black, citecolor=blue!70!black, urlcolor=blue!70!black]{hyperref}
\usepackage[stable]{footmisc}
\usepackage{subcaption}
\usepackage{amsmath}
\usepackage{amssymb,mathtools}
\usepackage{booktabs}
\usepackage{enumitem}
\usepackage{amsthm}

\newtheorem{theorem}{Theorem}
\newtheorem{lemma}{Lemma}

\theoremstyle{definition}

\theoremstyle{remark}

\newcommand{\E}{\mathbb{E}}

\newcommand{\R}{\mathbb{R}}

\title{When Does Agroforestry Income Reduce Deforestation? Evidence from a Natural Experiment in Madagascar}
\author{
  Camille DeSisto\\
  \small Sustainability Institute, Rice University, Houston, TX, USA
  \and
  Ranaivo Rasolofoson\\
  \small School of the Environment, University of Toronto, Toronto, Canada
  \and
  Michelle Foley\\
  \small MacMillan Center for International and Area Studies,\\
  \small Yale University, New Haven, CT, USA
  \and
  Harsh Parikh\textsuperscript{*}\\
  \small Yale School of Public Health, Yale University, New Haven, CT, USA\\[6pt]
  \small \textsuperscript{*}Corresponding author: \texttt{harsh.parikh@yale.edu}
}
\date{}
\begin{document}

\maketitle

\vspace{-1em}
\begin{center}
\textit{Preprint --- Not peer reviewed}\\[4pt]
\today
\end{center}
\vspace{1em}

\begin{abstract}
Tropical deforestation and rural poverty are deeply intertwined, yet isolating the causal effect of income on forest loss remains challenging. We use the 2015 global vanilla price boom—triggered by food-industry shifts toward natural flavoring—as an exogenous income shock affecting Madagascar's primary vanilla-producing region. Using a matching-augmented synthetic control design, we estimate that income gains reduced annual deforestation by 1.7 percentage points in 2017, equivalent to approximately 701 hectares of avoided forest loss. Under a monotonicity assumption linking the price boom to farmers' income, the sign of this reduced-form effect is informative about the causal direction of income on deforestation. However, effects were strongly heterogeneous: higher incomes reduced deforestation in drier, more accessible municipalities but increased clearing in wetter, low-elevation areas with high agricultural potential. These divergent patterns suggest that income simultaneously relaxes subsistence pressures driving forest dependence and raises the opportunity cost of conservation where agricultural returns are high. Our findings indicate that commodity-based agroforestry can align poverty alleviation with forest conservation under conditions of low agricultural opportunity cost. Still, policies must anticipate contexts where rising incomes amplify deforestation in agriculturally suitable land. The strategic targeting of livelihood interventions based on local agricultural potential may help reconcile development and conservation objectives in tropical forest frontiers.
\end{abstract}

\onehalfspacing

\section{Introduction}

Commodity crops sit at the center of tropical land-use change \citep{henders_trading_2015}. Since the 1980s, export-oriented agriculture has expanded rapidly as smallholder, state-enabled clearing gave way to enterprise-driven deforestation \citep{kastner_rapid_2014, rudel_changing_2009, defries_export-oriented_2013}. Converting forests to commodity production—soybeans, palm oil, cocoa, and others—drives carbon emissions and soil carbon loss, perturbs regional rainfall and land-surface temperature, and elevates infectious disease risk \citep{ordway_deforestation_2017, pendrill_agricultural_2019, van_straaten_conversion_2015, maeda_large-scale_2021, chaves_global_2020}.

Economic theory and a large empirical literature indicate that higher commodity prices increase deforestation by raising returns to land conversion in open economies \citep{angelsen_agricultural_1999, busch_what_2017, berman_crop_2023, wheeler_economic_2013, gaveau_three_2009, hargrave_economic_2013, assuncao_deforestation_2015, lundberg_maize_2022}. Yet, for rural households in low- and middle-income countries, the same crops provide crucial cash income and buffer shocks. Agroforestry—the integration of agriculture and trees/shrubs—may reconcile poverty alleviation and forest conservation goals. Commodity-based agroforestry offers relatively high earnings while retaining tree cover and ecosystem functions \citep{schroth_agroforestry_2010}; income support and alternative livelihoods can reduce reliance on risky, often illegal forest clearing \citep{ndoli_-farm_2021, emily_harwell_forests_2010, jayachandran_cash_2017, teo_reduction_2025}. The net effect of a commodity price shock on deforestation is therefore ambiguous and context-dependent: the same price increase that raises the return to land conversion also raises household income, potentially reducing pressure on forests through substitution away from more destructive livelihoods.

\begin{figure}
    \centering
    \includegraphics[width=1\linewidth]{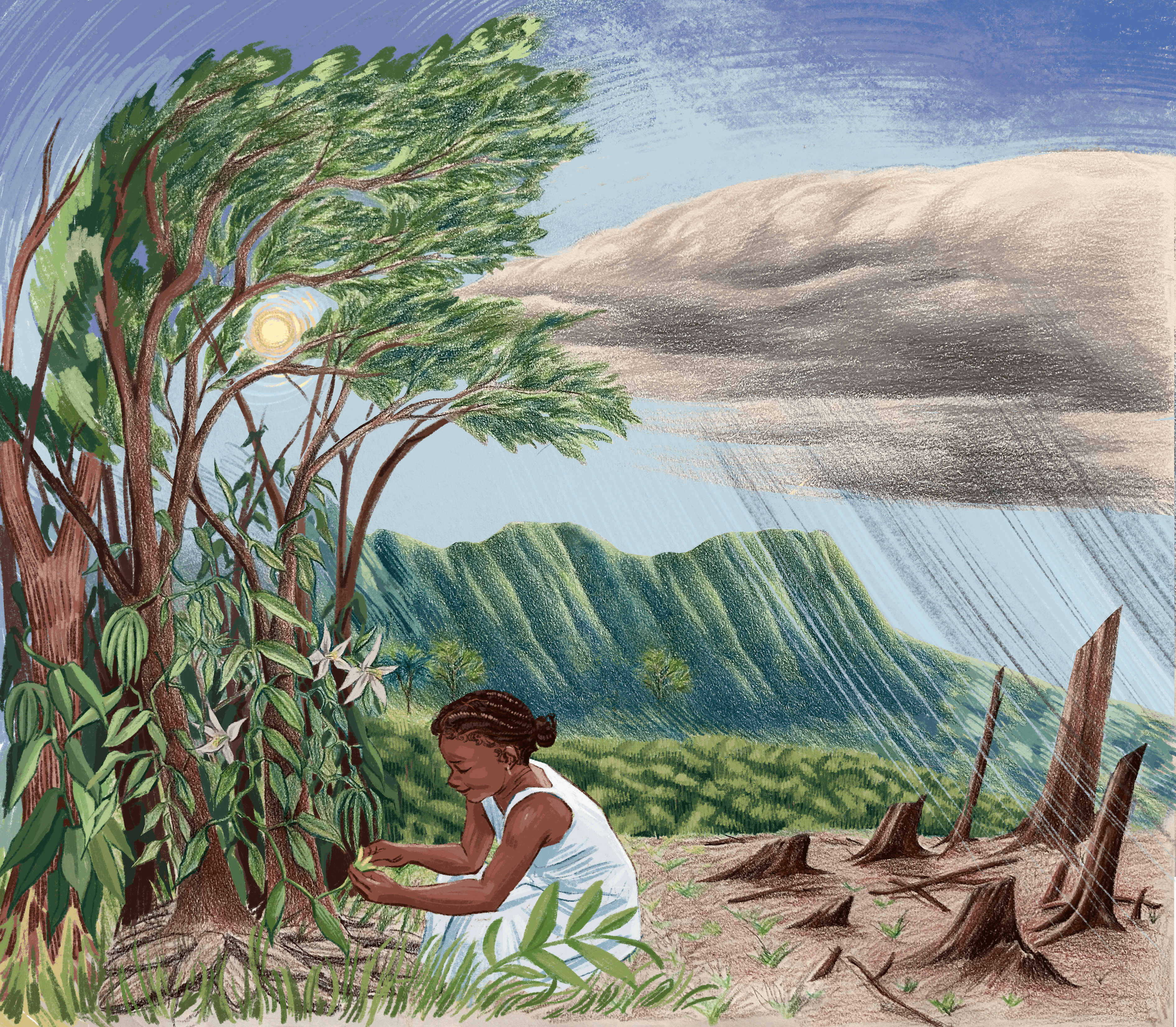}
    \caption{\textbf{Vanilla cultivation in the agroecological matrix of rural northeast Madagascar.} A farmer hand-pollinates vanilla orchids grown on tutor trees, reflecting the labor-intensive nature of the crop. Deforestation in the rural tropics is often a dangerous, last-resort activity driven by economic necessity. If higher incomes relax this pressure, wealth should reduce deforestation. However, the same income gains can incentivize forest clearing to expand agricultural production, making the net effect ambiguous. We use the 2015 global vanilla price boom---an exogenous shock to regional income in Madagascar's SAVA region---to isolate this relationship. On average, the income shock reduced deforestation. However, effects were strongly heterogeneous: income decreased deforestation in drier, more accessible municipalities and increased it in wetter, low-elevation areas with high agricultural potential.}
    \label{fig:illustration}
\end{figure}

\paragraph*{Madagascar \& Vanilla.}
Madagascar vanilla exemplifies this tension. Madagascar is a biodiversity hotspot with persistent forest loss and pervasive rural poverty \citep{myers_biodiversity_2000, vieilledent_spatial_2023, suzzi-simmons_status_2023, world_bank_world_2023}. Madagascar vanilla (\emph{Vanilla planifolia}) is a high-value export that supports hundreds of thousands of livelihoods and a substantial share of national exports \citep{andriatsitohaina_agroforestry_2024, hending_use_2018, wurz_win-win_2022}. The majority of the world's vanilla is grown in the SAVA region of northeast Madagascar, where approximately 80\% of households farm vanilla \citep{kunz_income_2020, kramer_small-scale_2023}. Although the SAVA is wealthier than many other regions in Madagascar, its residents remain burdened with high levels of poverty and food insecurity \citep{kramer_small-scale_2023, herrera_effects_2020, herrera_food_2021}.

The natural history of vanilla renders its cultivation labor-intensive and often risky, but potentially complementary to forest conservation. Vines reach maturity approximately three years after planting. Because \emph{V.~planifolia} is native to Mexico, each flower must be hand-pollinated to overcome the lack of co-occurring pollinators in Madagascar. Post-harvest curing and drying further expand the workload, and farmers often experience crop loss due to violence, pathogens, and cyclones \citep{wack_price_2023, osterhoudt_nobody_2020, iftikhar_vanilla_2023, hernandezhernandez_vanilla_2018, harison_security_2024}. Unlike other tropical commodity crops (e.g., palm oil), vanilla is an orchid best cultivated using tutor trees for shade and structural support; grown under agroforestry settings, it can complement biodiversity conservation goals \citep{hending_use_2018, wurz_win-win_2022}. In some cases, however, rainforest is cleared for vanilla cultivation, directly contributing to deforestation.

A vanilla price shock could therefore exacerbate or mitigate deforestation through multiple channels. In 2015, Madagascar vanilla experienced a price boom triggered by shifts in the food-industry towards natural flavoring \citep{noauthor_nestle_2015, press_no_2015}. We treat the 2015 vanilla price boom as a natural experiment to investigate the effects of income on deforestation. Anecdotally, following the 2015 boom, farmers rebuilt homes, bought new land, and started businesses selling furniture \citep{staevenson_bitter_2019}—activities that likely increased demand for forest-sourced planks and spurred land clearing. However, higher vanilla returns could also turn farmers away from more destructive practices: high returns may disincentivize lower-value crops (e.g., rice, often grown via swidden clearing, or \emph{tavy}) and logging in difficult-to-access forests. Although vanilla-producing areas have shown lower historical deforestation \citep{moser_economic_2008}, such associations are not necessarily causal. Using the 2015 price boom as a natural experiment, we therefore ask: \emph{how does the vanilla price boom affect deforestation in vanilla-producing municipalities, and what does this reveal about the relationship between agricultural income and forest loss?}

\paragraph*{Study Design \& Findings.}
We use the 2015 vanilla price boom as a plausibly exogenous shock, comparing deforestation trajectories in Madagascar's 73 vanilla-producing SAVA municipalities (treated) with matched non-vanilla municipalities (controls) using a matching-augmented synthetic control design. Under a monotonicity assumption linking the price boom to farmer income, the sign of our reduced-form estimates is informative about the causal direction of income on deforestation (Lemma~\ref{lem:sign}). On average, the price shock reduced deforestation by 1.7 percentage points in 2017 (${\sim}701$~ha of avoided forest loss), but effects were strongly heterogeneous: deforestation declined in drier, more accessible municipalities and increased in wetter, low-elevation areas with high agricultural potential. These patterns are consistent with income simultaneously relaxing subsistence pressures and raising the opportunity cost of conservation where agricultural returns are high.

\section{Results}

\paragraph*{Study context.}
From 2005–2012, deforestation rates across Madagascar were relatively low but cumulatively substantial; forested municipalities lost an average of 0.9\% of forest area per year during a period that coincided with low and stable vanilla prices (Fig.~\ref{fig:matching}c). 
Around 2013, the mean annual deforestation accelerated and nearly tripled to 2.6\% during 2015–2019. 
However, these associations do not necessarily disentangle the full causal relationship.
These concurrent dynamics motivate the causal design described below.

\paragraph*{Design and matching.}
We treat the 73 SAVA municipalities where vanilla is among the five main crops \citep{boone_posh_2022} as exposed to the price boom (\textit{treated}; Fig.~\ref{fig:matching}a); these municipalities produce 80–90\% of Madagascar's vanilla \citep{yoon_analysis_2020}. Municipalities with little or no vanilla production form the donor pool (\textit{controls}). Under our identification strategy, the price shock serves as an instrument-like source of variation in income: it sharply increased earnings in treated municipalities but not elsewhere, and the sign of the resulting effect is informative about the income–deforestation relationship under the monotonic relationship between vanilla prices and farmers' incomes (see Methods and Lemma~\ref{lem:sign} for formal details).

Treated municipalities tended to be at lower elevation (mean $=374$~m), steeper (mean slope $=8^{\circ}$), more sparsely populated (mean 2013–2014 density $=67$~people~km$^{-2}$), and less accessible (road density = 0.06 Km/Km$^2$) compared to controls. Treated municipalities also had higher baseline forest coverage (mean 2013–2014 forest cover $=19\%$) and a higher proportion of land covered by national protected areas (7\%) than controls.  To address these baseline differences, each treated municipality was matched to five comparable control municipalities on elevation, slope, population density, precipitation, baseline forest cover, road density, and protected area coverage before fitting the augmented synthetic control estimator (Fig.~\ref{fig:matching}b; see Methods for details).

\begin{figure}[!h]
    \centering
    \includegraphics[width=1\linewidth]{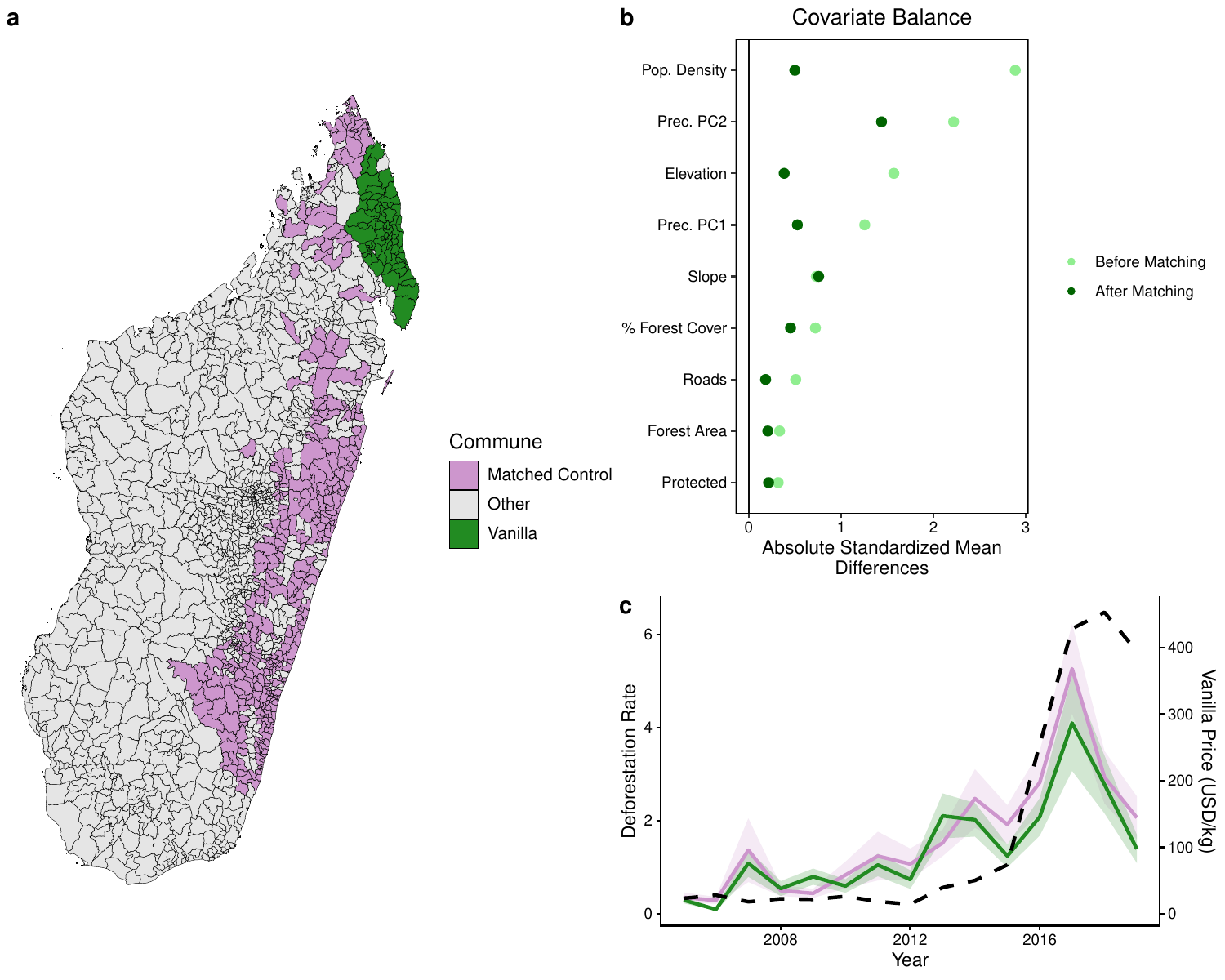}
    \caption{Statistical matching results: (a) map of municipalities across Madagascar, vanilla-farming (treated) municipalities and matched controls; (b) covariate balance before and after matching; and (c) mean deforestation in vanilla-producing municipalities (green) and matched controls (pink), alongside vanilla price (dashed black). Shaded ribbons denote 95\% confidence intervals. Note that population density, percent forest cover, and forest area represent mean values for 2013 and 2014. ``Prec.\ PC1'' and ``Prec.\ PC2'' are principal components for precipitation (Fig.~\ref{fig:pca}); ``Roads'' is the density of roads; ``Protected'' is the proportion of land area covered by a Madagascar National Park protected area.}
    \label{fig:matching}
\end{figure}

\paragraph*{Estimated effects.}
We first consider the SAVA region as one aggregated treated unit. The estimates indicate that the price shock reduced deforestation relative to matched controls (Fig.~\ref{fig:augsynth}a). In 2016, the estimated effect was $-0.7$ percentage points (pp) (95\% CI: $-2.4$ to $1.1$~pp), corresponding to ${\sim}289$~ha of avoided deforestation given 41{,}233~ha of treated forest. In 2017, the estimated reduction was $-1.7$~pp (95\% CI: $-3.5$ to $-0.1$~pp), corresponding to ${\sim}701$~ha of avoided forest loss. The negative effect was transient, persisting for ${\sim}2$ years, followed by a short-lived positive effect three years post-shock. Placebo analyses support 2015 as the appropriate shock year (Fig.~\ref{fig:shock_year}).

Modeling each municipality individually yields a similar pattern (Fig.~\ref{fig:augsynth}b): the average effect across treated units was $-0.8$~pp in 2016 and $-1.7$~pp in 2017, again followed by a short-lived positive effect. Despite this mean post-shock decrease in deforestation, effects varied widely across municipalities. Note that these effects reflect the relationship between agricultural \emph{income} and deforestation and should not be interpreted as protective effects of vanilla farming per se.

\begin{figure}[!h]
    \centering
    \includegraphics[width=1\linewidth]{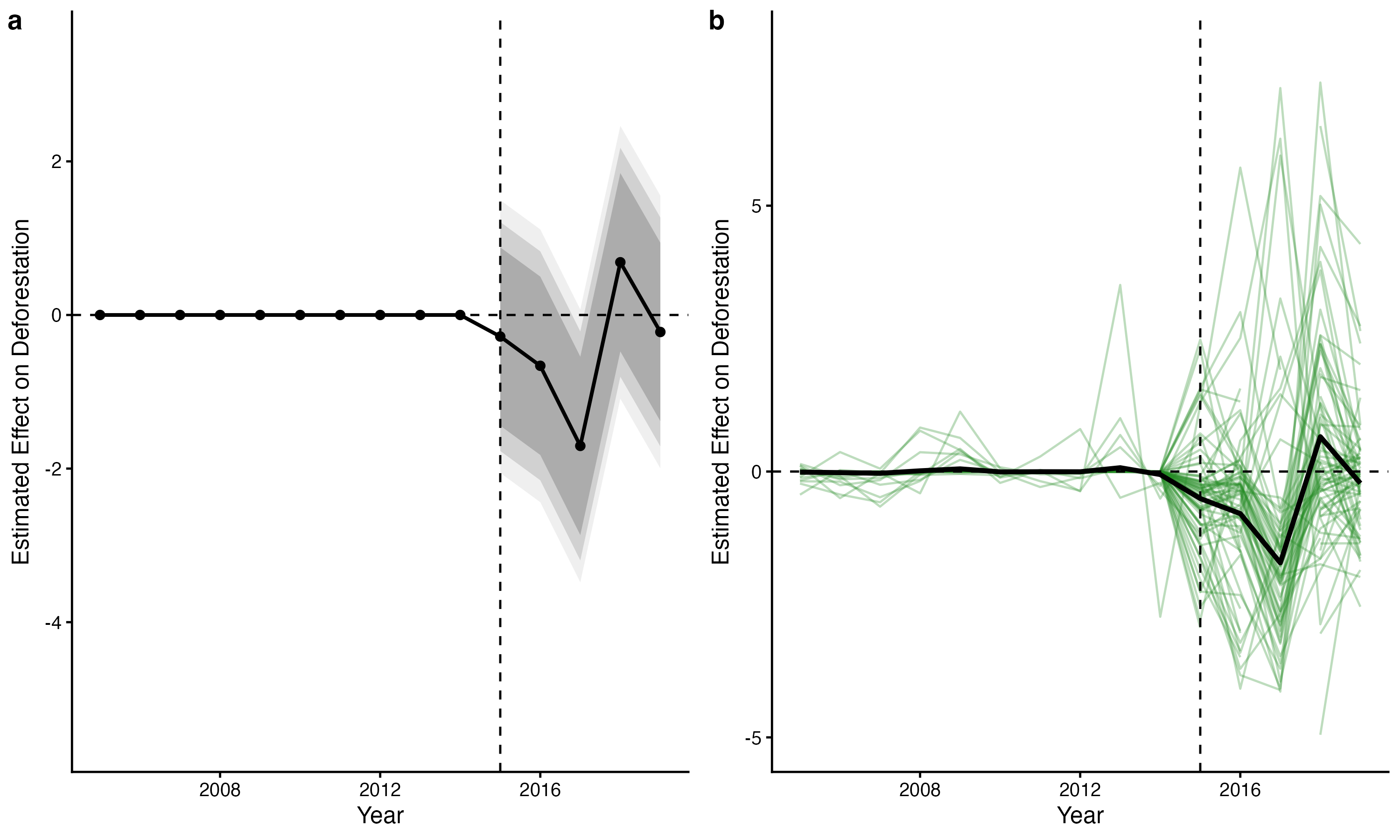}
    \caption{Augmented synthetic controls: (a) treated municipalities pooled; (b) treated municipalities modeled separately. In (a), points show estimates with 80\%, 90\%, and 95\% confidence ribbons (light to dark). In (b), green lines are municipality-level effects; the solid black line is the cross-municipality mean. Vertical dashed line marks 2015 (treatment onset). The $y$-axis in (b) is truncated to $[-5.5,\,3.5]$~pp for clarity.}
    \label{fig:augsynth}
\end{figure}

\paragraph*{Exploring heterogeneity.}
Estimated effects ranged from $-6.3$ to $+5.7$~pp in 2016 and from $-22.5$ to $+14.1$~pp in 2017. To characterize this variation, we fit a regression tree relating municipality-level effects (averaged over 2016–2017) to baseline covariates. This analysis identifies observable correlates of effect heterogeneity but does not isolate the causal moderating effect of a specific variable, as biophysical and socioeconomic variables are correlated (see Discussion). With that caveat, precipitation (PC1), elevation, and road density emerge as the primary correlates (Fig.~\ref{fig:heterogeneity}a,b). Reductions in deforestation following the price shock were concentrated in drier, more accessible municipalities. Conversely, in wetter, low-elevation municipalities with high agricultural potential, the price shock tended to increase deforestation.

\begin{figure}[!h]
    \centering
    \includegraphics[width=\linewidth]{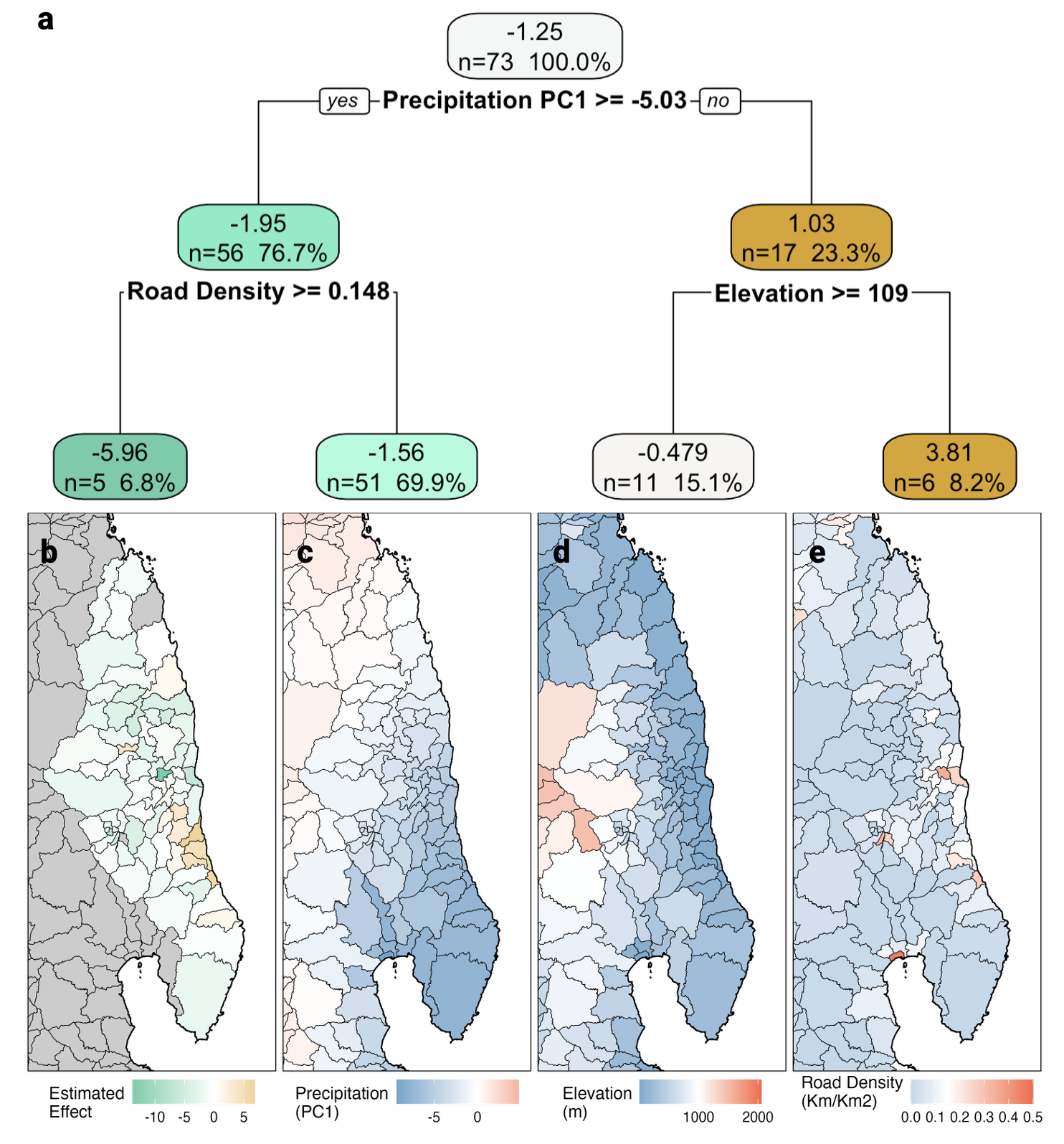}
    \caption{Heterogeneity in estimated effects of the price shock: (a) decision tree; (b) map of municipality-level effects; (c) precipitation (PC1); (d) elevation; and (e) road density. PC1 loads negatively on precipitation variables (higher values $\Rightarrow$ drier). The decision tree in panel (a) shows partitions based on municipality characteristics, where each internal node shows a splitting rule, each branch shows a decision path, and each terminal node represents a group of municipalities with shared characteristics. Nodes also denote the number of municipalities in each partition and the mean deforestation rate across those municipalities.}
    \label{fig:heterogeneity}
\end{figure}

\section{Discussion}
As evidenced by the 2015 vanilla price boom natural experiment, income was associated with a short-term decrease in deforestation in vanilla-producing municipalities. Across all treatment municipalities, the price shock decreased deforestation by an average of 1.7 percentage points in 2017 (Fig. \ref{fig:augsynth}a). Effects returned to pre-boom levels within three to four years. Importantly, the estimated effects exhibit a range of heterogeneity across municipalities (Fig. \ref{fig:augsynth}b). This effect heterogeneity is best explained by precipitation, elevation, and road density; income decreased deforestation in drier, more accessible municipalities but increased deforestation in wetter, low-elevation municipalities (Fig. \ref{fig:heterogeneity}). \textit{Income generated through commodity crop agroforestry can decrease deforestation, primarily when the baseline agricultural profitability is low.} 

\subsection{Causal Evidence}
We detected a negative effect of income on deforestation at the municipality scale. Specifically, the 2015 vanilla price boom decreased deforestation across Madagascar's vanilla-producing region (Fig. \ref{fig:augsynth}a). Increased farmer income was a likely mechanism of this effect \citep{boone_posh_2022}. Vanilla is farmed under a variety of conditions—from monoculture to diversified and from open land to integrated within forests— and does not necessarily require deforestation. When practiced sustainably, vanilla agroforestry can therefore promote landscape-level forest connectivity \citep{martin_shade-tree_2021}. Following the price shock, vanilla farmers may have substituted livelihoods that are more forest destructive for vanilla cultivation. The income generated through vanilla farming may have lowered the burdens that drive the need for forest resource extraction. For example, food security is positively associated with vanilla yield in northeast Madagascar \citep{herrera_food_2021}. Our finding that vanilla income decreased deforestation contrasts the pervasive notion that increased agricultural prices exacerbate deforestation \citep{angelsen_agricultural_1999, busch_what_2017, miranda_impacts_2024, berman_crop_2023}. Therefore, we contribute key evidence to recent discourse that, under certain circumstances, agriculture can reduce deforestation \citep{teo_reduction_2025}. 

However, the negative effects of the 2015 Madagascar vanilla boom on deforestation were short-lived. We estimated that the vanilla price boom decreased deforestation for two years, before potentially even increasing deforestation in 2018 (Fig. \ref{fig:augsynth}a). There may be a tipping point at which increased income flips from average net decreases to increases in deforestation. Note that vanilla prices peaked in 2018; there could be a price point at which vanilla income exacerbates deforestation. Additional research is required to test price tipping points for deforestation. Furthermore, when vanilla prices dropped in 2020 (after our study period), farmer income dramatically decreased  \citep{harison_compounding_2024}. This led people to reduce their food intake \citep{harison_compounding_2024}, exacerbating the already high levels of food insecurity among vanilla farmers \citep{herrera_food_2021}. Conservation and development policy must therefore consider the nuances of price effects. For example, incorporating livelihood practices such as animal husbandry and crop diversification may increase the resilience of vanilla farmers to price volatility \citep{kunz_income_2020, fleming_impact_2025}.

\subsection{Exploring Heterogeneity}
In areas of high agricultural productivity, the economic incentive to deforest dominated the conservation benefits of agroforestry. Although the vanilla price shock decreased deforestation on average, it increased deforestation under certain circumstances. Specifically, the price shock increased deforestation in high-precipitation, low-elevation municipalities (Fig. \ref{fig:augsynth}a, \ref{fig:heterogeneity}). Rainfall is associated with increased agricultural productivity in Madagascar \citep{bruelle_short-_2015, rigden_retrospective_2022}. Elevation also constrains agriculture \citep{liang_extension_2023}, and deforestation disproportionately occurs at low elevations \citep{chen_disproportionate_2024}. Therefore, when vanilla prices spiked, the opportunity cost of not farming in high-precipitation, low-elevation areas was high. In agriculturally productive areas, vanilla farmers may have therefore expanded vanilla cultivation to directly capitalize on higher pay when vanilla prices were high. A plausible indirect mechanism is that income generated from the vanilla price boom financially empowered farmers to purchase and develop new land or start other extractive businesses \citep{staevenson_bitter_2019}. The opportunity to profit likely drives observed variation in the response to price shocks at the municipality level. Because the vanilla price boom has varied effects on household wealth and health \citep{boone_posh_2022}, heterogeneity in effects on deforestation rates likely also occurs at the household level and is an important topic for future research.  

Accessibility, including via roads, is often associated with higher agricultural return, and thus higher rates of deforestation \citep{barbier_explaining_2004}. However, our heterogeneity analysis suggests that higher vanilla incomes reduced deforestation in drier and more accessible municipalities. Our heterogeneity analysis does not isolate the causal effect of a specific moderator. That the price hike reduced deforestation in more accessible areas is therefore not necessarily the moderating effect of roads but can be the effects of other moderators that correlate with roads. For example, higher road density correlates with lower precipitation. Reduced deforestation in higher road density areas may also be explained by the little forest cover left in these areas, even at baseline. 

\subsection{Limitations}
Our results are limited by data availability. Although longitudinal data on municipality-level vanilla production would strengthen our analysis, such data are not publicly available. The global data products we used may also obscure ecological nuances. For example, the limited number of weather stations in Madagascar may reduce the accuracy of WorldClim precipitation data, even though WorldClim data are commonly used in research across Madagascar (e.g., \citep{brown_predicting_2015, morelli_fate_2020}). Additionally, the global forest cover data—upon which the data used here are partially based—tend to overestimate Madagascar’s humid forests, which are the focus of this analysis \citep{rafanoharana_tree_2023}. However, our treated and control municipalities were both in eastern Madagascar's humid and subhumid bioclimatic zones, alleviating the concern about bias from incorrect estimation of forest cover. Additionally, data quality issues would only bias our results if measurement errors were systematically correlated with treatment assignment or timing of the vanilla price shock, which is unlikely. Nevertheless, increased availability of environmental data in Madagascar would strengthen our ability to construct accurate counterfactuals and estimate how income influences deforestation. 

Socio-political factors not considered in our heterogeneity analysis may shape how vanilla prices affect deforestation. For example, the 2015 Madagascar vanilla price boom was associated with dramatically heightened violence \citep{wack_price_2023, osterhoudt_nobody_2020}. Vanilla theft and vigilante justice against thieves increase the risk of farming vanilla, potentially incentivizing other types of agriculture over vanilla and affecting deforestation rates. Violence may therefore be a mechanism through which vanilla income affects deforestation. Further, political upheaval is known to disrupt tropical agricultural economies; for example, political revolutions have driven forest clearing for agriculture across the tropics (e.g., Uganda \citep{namaalwa_profitability_2001}, Rwanda \citep{kanyamibwa_impact_1998}, Colombia \citep{alvarez_forests_2003}). Our study coincides with a significant post-crisis period following the 2009 coup in Madagascar, during which governance disruptions may have independently accelerated forest loss. Recent work by Neugarten et al.\ \citep{neugarten_effect_2024} has documented a link between the post-crisis period and deforestation; annual deforestation rates accelerated during 2014–2017, particularly within community forest management areas. This post-crisis period would be a limitation of our study if there were differential effects of the event across municipalities. However, because this event affected the entire nation, we do not believe that it confounded our estimates.

\subsection{Conclusions}
Using a robust causal inference study design, we demonstrate that the 2015 vanilla price boom temporarily reduced deforestation across Madagascar’s vanilla-producing municipalities, on average (Fig. \ref{fig:augsynth}). Contrasting existing paradigms that increased open market agricultural prices exacerbate deforestation  \citep{angelsen_agricultural_1999, busch_what_2017, miranda_impacts_2024}, our results emphasize that income generated from commodity crop agriculture can slow deforestation under conditions with low agricultural opportunity cost. Critically, we demonstrate that the environmental outcomes of increased income are heterogeneous (Fig. \ref{fig:heterogeneity}). In particular, deforestation depended on climatic and topographic constraints relevant to agricultural productivity and profitability. While sustainable agroforestry should be considered in policies to alleviate economic pressures on forests, policymakers and practitioners must consider how profitability shapes economic landscapes of deforestation.  

\section{Methods}

\subsection{Study design}
We estimate the causal effect of income on deforestation by using the 2015 global vanilla price boom as an exogenous shock that differentially increased earnings in Madagascar’s main vanilla region (SAVA) relative to other municipalities. The design combines cross-sectional exposure to vanilla cultivation with time-series variation in prices, and uses a matching–augmented synthetic control estimator to adjust for place-based confounding and time-varying latent factors. We report effects for (i) the aggregated treated region and (ii) individual municipalities, then examine heterogeneity by biophysical context.

\subsection{Units, exposure, and period}
Municipalities (i.e., communes) are the unit of analysis. Treated municipalities are in SAVA (Sambava, Antalaha, Voh\'{e}mar, Andapa), where 80–90\% of Madagascar’s vanilla is produced\citep{yoon_analysis_2020}, and list vanilla among their five main crops\citep{boone_posh_2022} (\emph{n}=73). The donor pool comprises municipalities outside SAVA with little to no vanilla production. The study window is 2004–2019; 2004–2014 forms the pre-shock period, and 2015 onward marks the post-shock period (alternative shock years are assessed in sensitivity analyses).

\subsection{Outcome and covariates}
Annual forest cover ($F_{i,t}$, hectares) is derived from Vieilledent et al (2018)\citep{vieilledent_combining_2018}, built on Hansen et al (2013)\citep{hansen_high-resolution_2013}, by summing forest pixels within municipal boundaries. The annual deforestation rate is
\(
Y_{i,t} \;=\; -100 \times \frac{F_{i,t}-F_{i,t-1}}{F_{i,t-1}},
\)
so that positive values indicate increased forest loss. Baseline covariates used for design and heterogeneity analyses include population density (WorldPop), elevation and slope (SRTM via \texttt{elevatr}\citep{hollister_elevatr_2023}), long-run precipitation (WorldClim\citep{fick_worldclim_2017}), protection (proportion of area covered by a Madagascar National Park protected area), and road density (Km road/ Km$^2$ land; World Bank Group). Monthly precipitation (12 variables) is summarized using principal components; PC1 captures a wet–dry gradient (higher = drier) and PC2 captures seasonality (higher = less seasonal). Unless stated, covariates are averaged over 2013–2014.

\subsection{Statistical Analysis}
We provide a detailed formulation and discussion of our statistical analysis, identification and estimation in Appendix~\ref{app:identification}.
\subsubsection{Identification strategy}
Global price movements raise income primarily where vanilla is produced. We therefore treat the interaction of (i) municipal exposure to vanilla and (ii) the post-2015 period as the realized treatment. Note that municipalities are considered exposed to vanilla if they are in the SAVA region and if vanilla is among their top exports, and unexposed to vanilla if they are not in the SAVA region and vanilla is not among their top exports. Identification rests on: (A1) no anticipation (pre-2015 trajectories unaffected by the boom); (A2) limited spillovers to non-SAVA municipalities in the short run; (A3) shock relevance and monotonicity (the boom weakly increases income for exposed municipalities); and (A4) exclusion after adjustment: conditional on observed covariates and latent factors recovered from pre-period fit, the price boom affects deforestation only through income in exposed places. Formal notation and a reduced-form/first-stage derivation are provided in Appendix~\ref{app:identification}.

\subsubsection{Constructing a comparable donor pool}\label{sec:matching}
To improve design comparability, we restrict the donor pool via statistical matching using \texttt{MatchIt}\citep{ho_matchit_2011}. Each treated municipality is matched to five controls based on baseline percent forest cover, total forest area, population density, elevation, slope, precipitation PCs (PC1, PC2), percent protected, and road density. We audit matches to exclude implausible donors (e.g., arid municipalities unlikely to support vanilla). Balance is assessed via standardized mean differences and pre/post-matching plots (Fig.~\ref{fig:matching}b). Results are robust to the choice of K (number of controls matched to a treated unit). 

\subsubsection{Estimation: augmented synthetic control}
We implement augmented synthetic control (ASC) using \texttt{augsynth}\citep{ben-michael_augmented_2021}. For each treated municipality, ASC estimates a convex combination of matched controls that reproduces its \emph{pre-shock} deforestation path and augments the synthetic weights with outcome regression to correct residual imbalance. The difference between observed and synthetic outcomes post-shock is the estimated effect. We (i) aggregate treated municipalities by area weighting and estimate a pooled effect, and (ii) estimate municipality-level effects and summarize them across units and years. Pointwise confidence intervals follow \texttt{augsynth} defaults. See Fig. \ref{fig:augsynth_weights} for model weights. Placebo-in-time checks support 2015 as the shock year (Extended Data Fig.~\ref{fig:shock_year}).

\subsubsection{Heterogeneity analysis}\label{sec:heterogeneity}
To reduce short-run noise, we average municipality-level effects over 2016–2017 and relate these to environmental conditions used in matching. We fit a regression tree (\texttt{rpart}, ANOVA criterion) with complexity constraints (minimum split = 5; minimum bucket = 4; maximum depth = 6; complexity parameter = 0.001). Precipitation (PC1), elevation, and road density emerge as primary moderators (Fig.~\ref{fig:heterogeneity}).

\subsubsection{Sensitivity analyses}
We assess robustness to: (i) shock timing (2014, 2016, 2017 placebos); (ii) matching protocol (with/without audit; K=1 controls); and (iii) donor pool restrictions (eastern humid zones). Results are directionally consistent (Extended Data / Supplementary Figures).

\subsection{Software and reproducibility}
Analyses were conducted in \texttt{R}~4.3.1\citep{r_core_team_r_2023} with \texttt{MatchIt}\citep{ho_matchit_2011} and \texttt{augsynth}\citep{ben-michael_augmented_2021}. Preprocessing and figure code are available upon request; data sources are cited above.

\clearpage
\bibliography{references}

\clearpage
\appendix
\section{Effect Identification and Estimation}\label{app:identification}

\subsection{Setup and notation}

Let $i=1,\ldots,N$ index municipalities and $t=t_0,\ldots,t_1$ index years. We observe
\[
\{Y_{i,t},\,X_i,\,V_i,\,Z_t\}_{i,t},
\]
where $Y_{i,t}\in\R$ is the annual percent deforestation rate, $X_i\in\R^p$ is a vector of time-invariant covariates (elevation, slope, precipitation PCs, baseline forest cover, road density, protected-area share), $V_i\in\{0,1\}$ indicates exposure to vanilla ($V_i=1$ if municipality $i$ is in SAVA and vanilla is among its five main crops; $V_i=0$ otherwise), and $Z_t\in\{0,1\}$ is the post-shock indicator ($Z_t=1$ for $t\ge t^\star$ with $t^\star=2015$; $Z_t=0$ otherwise). Let $N_1:=\sum_{i=1}^{N}V_i$ and $N_0:=N-N_1$ denote the number of treated and control municipalities, respectively. The realized treatment is
\[
D_{i,t}\;:=\;V_i\,Z_t\;\in\;\{0,1\}.
\]

\subsection{Structural model and reduced form}\label{sec:structural}

We posit the existence of a latent income variable $W_{i,t}\in\R$ and time-varying unobservables $U_{i,t}\in\R^r$ (with $r\ge 1$) satisfying the following structural relations:
\begin{align}
Y_{i,t} &= \alpha(X_i)^\top U_{i,t} + \gamma(X_i)\,W_{i,t} + \varepsilon_{i,t}, \label{eq:Y-struct}\\
W_{i,t} &= \rho(X_i)^\top U_{i,t} + \delta(X_i)\,V_i\,Z_t + \nu_{i,t}, \label{eq:W-struct}
\end{align}
where $\alpha:\R^p\to\R^r$, $\rho:\R^p\to\R^r$, and $\gamma,\delta:\R^p\to\R$ are measurable functions, and the errors satisfy
\[
\E[\varepsilon_{i,t}\mid X_i, U_{i,t}, W_{i,t}] = 0, \qquad \E[\nu_{i,t}\mid X_i, U_{i,t}, V_i, Z_t] = 0.
\]

Equation~\eqref{eq:Y-struct} models deforestation as a function of unobserved municipality--time confounders $U_{i,t}$, latent income $W_{i,t}$, and idiosyncratic shocks. The coefficient $\gamma(X_i)$ is the structural \emph{income effect}: the causal effect of a unit increase in income on deforestation, conditional on covariates. Equation~\eqref{eq:W-struct} is a first-stage equation: income depends on the same unobservables $U_{i,t}$ and is shifted by the interaction $V_i Z_t$---the exogenous income shock from the vanilla price boom. The coefficient $\delta(X_i)\ge 0$ captures the first-stage relevance of the price boom for income.

\paragraph{Remark on exposure.} Exposure $V_i$ is time-invariant and predetermined (it reflects the long-standing geography of vanilla cultivation, not a post-2015 decision). We do not model a structural equation for $V_i$; instead, we allow $V_i$ to be arbitrarily correlated with the latent factors $U_{i,t}$ through the factor loadings, which the synthetic control design absorbs.

\paragraph{Reduced form.} Substituting \eqref{eq:W-struct} into \eqref{eq:Y-struct} yields
\begin{equation}\label{eq:reduced-form}
Y_{i,t} \;=\; \underbrace{\bigl[\alpha(X_i) + \gamma(X_i)\,\rho(X_i)\bigr]^\top U_{i,t}}_{\text{latent time--place component}} \;+\; \underbrace{\gamma(X_i)\,\delta(X_i)}_{=:\;\tau(X_i)}\;V_i\,Z_t \;+\; \underbrace{\gamma(X_i)\,\nu_{i,t} + \varepsilon_{i,t}}_{=:\;\eta_{i,t}},
\end{equation}
where $\E[\eta_{i,t}\mid X_i, U_{i,t}]=0$. The reduced-form treatment effect on deforestation is $\tau(X_i) = \gamma(X_i)\,\delta(X_i)$, the product of the income effect and the first-stage relevance.

\paragraph{Interactive fixed-effects representation.} Define the composite factor loading $\Phi_i := [\alpha(X_i)+\gamma(X_i)\rho(X_i)]\in\R^r$ and let $\Lambda_t$ denote the common factors underlying $U_{i,t}$ (with municipality-specific loadings absorbed into $\Phi_i$). Then \eqref{eq:reduced-form} can be written as
\begin{equation}\label{eq:IFE}
Y_{i,t} \;=\; \Lambda_t^\top \Phi_i \;+\; \tau(X_i)\,D_{i,t} \;+\; \eta_{i,t}, \qquad \E[\eta_{i,t}\mid \Lambda_t,\Phi_i,X_i]=0.
\end{equation}
This is a standard interactive fixed-effects (IFE) panel model with heterogeneous treatment effects \citep{xu_generalized_2017,ben-michael_augmented_2021}, augmented by the observation that the treatment effect $\tau(X_i)$ has a signed structural interpretation via the decomposition $\tau = \gamma\cdot\delta$.

\subsection{Potential outcomes and estimands}

Let $Y_{i,t}(d)$ denote the potential outcome under $D_{i,t}=d\in\{0,1\}$, so the realized outcome satisfies $Y_{i,t} = Y_{i,t}(D_{i,t})$. We target the following estimands:
\begin{align}
\tau_t &:= \E\!\bigl[Y_{i,t}(1) - Y_{i,t}(0) \mid V_i=1\bigr], \qquad t\ge t^\star, \label{eq:ATT-time}\\
\tau(x) &:= \E\!\bigl[Y_{i,t}(1) - Y_{i,t}(0) \mid X_i=x,\;V_i=1\bigr], \qquad t\ge t^\star. \label{eq:CATE-x}
\end{align}
The first is the average treatment effect on the treated (ATT) at time $t$; the second is the conditional ATT (CATT), which we use in the heterogeneity analysis (Section~\ref{sec:heterogeneity}). Under \eqref{eq:IFE}, $\tau_t = \E[\tau(X_i)\mid V_i=1]$ and $\tau(x) = \gamma(x)\,\delta(x)$.

\subsection{Identifying assumptions}

We impose the following conditions, adapted to a shock--exposure setting with interactive fixed effects.

\begin{enumerate}[leftmargin=1.6em,label=\textbf{(A\arabic*)}]

\item \textbf{No anticipation.} For all $t < t^\star$, $Y_{i,t}(1) = Y_{i,t}(0)$ almost surely. \\
\emph{Content:} Prior to the 2015 price boom, potential outcomes do not depend on future treatment status. This is standard in the synthetic control literature and is empirically supported by the absence of pre-treatment trends in estimated effects (Figs.~\ref{fig:augsynth} and \ref{fig:shock_year}).

\item \textbf{No interference (SUTVA).} For all $i$ with $V_i=0$ and all $t$, the price boom does not affect municipality $i$'s deforestation rate. For treated municipalities, potential outcomes depend only on own treatment status, not on other municipalities' treatment.\\
\emph{Content:} Short-run spillovers from SAVA to non-SAVA municipalities are negligible. This is plausible because vanilla production is geographically concentrated, and the income shock did not materially affect the economic conditions of distant non-vanilla municipalities within our study window.

\item \textbf{Relevance and monotonicity.} For all $x$ in the support of $X_i$ among treated municipalities, $\delta(x) \ge 0$, with $\E[\delta(X_i) \mid V_i=1] > 0$. For municipalities with $V_i=0$, $\delta(x)=0$.\\
\emph{Content:} The price boom weakly increases income in all vanilla-producing municipalities (with a strict increase on average) and does not increase income in unexposed municipalities. The first part follows from the mechanical link between vanilla prices and farmer revenue in a region where $\sim$80\% of households farm vanilla; the second follows from the geographic concentration of vanilla production.

\item \textbf{Exclusion after conditioning.} Conditional on $(\Lambda_t, \Phi_i, X_i)$, the post-shock indicator $Z_t$ affects $Y_{i,t}$ only through income $W_{i,t}$ in exposed municipalities.\\
\emph{Content:} After accounting for latent factors and observed covariates, the price boom has no direct effect on deforestation except through its effect on income. Potential violations include the direct use of vanilla vines as a land-use practice; our reduced-form effect $\tau$ should therefore be interpreted as the total effect mediated by income broadly construed (including income-financed land expansion), rather than a pure substitution effect.

\item \textbf{Overlap and factor model regularity.}
\begin{enumerate}[label=(\roman*)]
\item For each treated unit $i$, there exist convex weights $\{w_{ij}\}_{j\in\mathcal{J}(i)}$ with $w_{ij}\ge 0$ and $\sum_j w_{ij}=1$ such that
\[
\Bigl\|\Phi_i - \sum_{j\in\mathcal{J}(i)} w_{ij}\Phi_j\Bigr\| + \Bigl\|X_i - \sum_{j\in\mathcal{J}(i)} w_{ij}X_j\Bigr\| \;=\; o_p(1).
\]
\item The number of factors $r$ is fixed and finite.
\item $\|\Lambda_t\|$ is uniformly bounded, and $\frac{1}{T_0}\sum_{t\in\mathcal{T}_0}\Lambda_t\Lambda_t^\top$ converges to a positive-definite matrix as $T_0\to\infty$.
\end{enumerate}
\emph{Content:} Part (i) requires that each treated municipality's factor loadings lie approximately in the convex hull of its matched donors'---a condition that statistical matching (Section~\ref{sec:matching}) is designed to achieve. Parts (ii)--(iii) are standard regularity conditions for IFE models \citep{xu_generalized_2017,ben-michael_augmented_2021}.

\end{enumerate}

\subsection{Identification results}

\begin{theorem}[Identification of the reduced-form ATT]\label{thm:id}
Under \textbf{(A1)--(A5)}, for any $t\ge t^\star$, the ATT is identified as
\[
\tau_t \;=\; \E\!\bigl[Y_{i,t} - Y_{i,t}^{(0)} \mid V_i=1\bigr],
\]
where $Y_{i,t}^{(0)} := \Lambda_t^\top\Phi_i + \eta_{i,t}$ is the counterfactual outcome that treated unit $i$ would have experienced absent the shock, constructed from the pre-shock fit.
\end{theorem}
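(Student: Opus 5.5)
The argument has two parts. First I show that the estimand equals the stated expression. Then I show that the counterfactual $Y^{(0)}_{i,t}$ is a functional of the observed data distribution, which is the actual content of identification.

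\emph{Step 1: link potential outcomes to the IFE model.} Under (A2), unit $i$'s potential outcomes depend only on its own $D_{i,t}$. Combining the structural equations \eqref{eq:Y-struct}--\eqref{eq:W-struct} with the exclusion restriction (A4), the only channel from $D_{i,t}$ to $Y_{i,t}$ runs through $W_{i,t}$. Hence $Y_{i,t}(d)=\Lambda_t^\top\Phi_i+\tau(X_i)\,d+\eta_{i,t}$, as in \eqref{eq:IFE}. This gives $Y_{i,t}(0)=Y^{(0)}_{i,t}$ for every unit and period. For a treated unit with $t\ge t^\star$ we have $D_{i,t}=1$, so consistency gives $Y_{i,t}=Y_{i,t}(1)$. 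Substituting into \eqref{eq:ATT-time} yields $\tau_t=\E[Y_{i,t}-Y^{(0)}_{i,t}\mid V_i=1]$. This also equals $\E[\tau(X_i)\mid V_i=1]$, because $\eta$ cancels inside the difference. The mean-zero property of $\eta$ given $(\Lambda_t,\Phi_i,X_i)$ is what makes conditioning on $V_i$ harmless here. $V_i$ enters only through the loadings, which are conditioned on, so $\E[\eta_{i,t}\mid V_i=1]=0$ follows by iterated expectations.

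\emph{Step 2: express $Y^{(0)}$ through donors.} For $t<t^\star$, (A1) gives $D_{i,t}=0$, so pre-period outcomes of all units are draws of $Y_{i,t}(0)$. For control units $V_j=0$, (A2)--(A3) give $Y_{j,t}=Y_{j,t}(0)$ at all $t$. Take the convex weights $w_{ij}$ from (A5)(i). Then
\[
\E\Bigl[Y^{(0)}_{i,t}-\sum_j w_{ij}Y_{j,t}\,\Big|\,\Lambda_t,\Phi,X\Bigr]=\Lambda_t^\top\Bigl(\Phi_i-\sum_j w_{ij}\Phi_j\Bigr),
\]
and by (A5)(iii) its magnitude is at most $\sup_t\|\Lambda_t\|\cdot o_p(1)=o_p(1)$. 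It remains to show these weights are recoverable from pre-period data. Suppose weights $\hat w$ match the observed pre-period outcomes and covariates. Then $\frac1{T_0}\sum_{t<t^\star}\Lambda_t\Lambda_t^\top$ converging to a positive-definite limit (A5(iii)), together with $r$ fixed (A5(ii)), forces $\Phi_i-\sum_j\hat w_{ij}\Phi_j\to0$. Pre-period fit of $\Lambda_t^\top\Phi$ on a full-rank factor path pins down the loadings. The averaged noise $\frac1{T_0}\sum\Lambda_t\eta_{\cdot,t}$ vanishes because $\eta$ has conditional mean zero. The ASC outcome-regression augmentation only removes the residual imbalance, so it does not change the limit.

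\emph{Main obstacle.} The delicate step is the loading-recovery argument in Step 2. Matching pre-period outcomes controls $\Phi_i-\sum_j\hat w_{ij}\Phi_j$ only up to the idiosyncratic noise in those outcomes. The standard synthetic-control bias bound (Abadie-type, or its Ben-Michael et al.\ augmented version) should be invoked there rather than re-derived. I would state the conclusion as identification in the limit $T_0\to\infty$, matching the $o_p(1)$ form of (A5)(i). Finally, average over treated units conditional on $V_i=1$ to obtain the displayed $\tau_t$.
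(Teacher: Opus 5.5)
Your proposal is essentially the paper's proof. Step 1 reproduces the identity $Y_{i,t}-Y^{(0)}_{i,t}=\tau(X_i)$ for $V_i=1$, $t\ge t^\star$, averaged over treated units, with the potential-outcome and consistency bookkeeping made explicit; Step 2 is the paper's appeal to (A5) and the pre-shock fit to recover $\Lambda_t^\top\Phi_i$.

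The loading-recovery step you flag is also where the paper only asserts recoverability, and your caution is justified. With $K=5$ fixed noisy donors, exact pre-period fit is generically unattainable, so your ``suppose $\hat w$ matches'' premise is not met, and the paper's own consistency bound keeps a non-vanishing imbalance term.
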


\begin{proof}
By \textbf{(A1)}, for $t<t^\star$, $D_{i,t}=0$ for all $i$, so $Y_{i,t} = \Lambda_t^\top\Phi_i + \eta_{i,t}$. For $t\ge t^\star$ and $V_i=1$, equation~\eqref{eq:IFE} gives $Y_{i,t} = \Lambda_t^\top\Phi_i + \tau(X_i) + \eta_{i,t}$. Thus
\[
Y_{i,t} - Y_{i,t}^{(0)} = \tau(X_i), \qquad \text{for } V_i=1,\; t\ge t^\star.
\]
Averaging over treated units yields $\tau_t = \E[\tau(X_i)\mid V_i=1]$. The counterfactual $Y_{i,t}^{(0)}$ is not directly observed but is estimable from the pre-shock period: by \textbf{(A5)}, there exist donor weights that approximate $\Phi_i$, and the latent factor structure $\Lambda_t^\top\Phi_i$ is recoverable from the pre-shock path of outcomes. The augmented synthetic control estimator (Section~\ref{sec:ASC} below) provides a consistent estimate $\widehat{Y}_{i,t}^{(0)}$ of this counterfactual.
\end{proof}

\begin{lemma}[Sign identification for the income effect]\label{lem:sign}
Under \textbf{(A3)}, for almost every $x$ in the support of $X_i$ among treated units:
\[
\operatorname{sign}\{\tau(x)\} \;=\; \operatorname{sign}\{\gamma(x)\}.
\]
That is, the sign of the reduced-form effect $\tau(x)=\gamma(x)\,\delta(x)$ reveals the sign of the structural income effect $\gamma(x)$.
\end{lemma}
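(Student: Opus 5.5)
The proof will rest on the reduced-form factorization \eqref{eq:reduced-form}, $\tau(x)=\gamma(x)\,\delta(x)$, combined with the multiplicativity of the sign function, $\operatorname{sign}(ab)=\operatorname{sign}(a)\operatorname{sign}(b)$. Fix $x$ in the support of $X_i$ among treated units. By \textbf{(A3)}, $\delta(x)\ge 0$, so $\operatorname{sign}\{\delta(x)\}\in\{0,+1\}$. I will split the support into the two sets $S_+:=\{x:\delta(x)>0\}$ and $S_0:=\{x:\delta(x)=0\}$, both of which are measurable because $\delta$ is.

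On $S_+$ the argument takes one line. There $\operatorname{sign}\{\delta(x)\}=+1$, so
\[
\operatorname{sign}\{\tau(x)\}=\operatorname{sign}\{\gamma(x)\}\cdot\operatorname{sign}\{\delta(x)\}=\operatorname{sign}\{\gamma(x)\}.
\]
This covers all three cases $\gamma(x)>0$, $\gamma(x)<0$ and $\gamma(x)=0$ simultaneously. The relevance part of \textbf{(A3)}, $\E[\delta(X_i)\mid V_i=1]>0$ together with $\delta\ge 0$, shows that $S_+$ has positive probability under the treated distribution of $X_i$. So the conclusion is non-vacuous.

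The main obstacle is $S_0$. There $\tau(x)=0$ regardless of $\gamma(x)$, so the equality $\operatorname{sign}\{\tau(x)\}=\operatorname{sign}\{\gamma(x)\}$ can fail whenever $\gamma(x)\neq 0$. \textbf{(A3)} as written only gives $\delta(x)>0$ on a set of positive measure, not almost everywhere. The "almost every $x$" claim therefore does not follow from \textbf{(A3)} alone unless $S_0$ is null.

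I would handle this in one of two ways, and I would state explicitly which one is used:
\begin{itemize}
\item[(a)] Strengthen the monotonicity clause to strict relevance, $\delta(x)>0$ for almost every treated $x$. This is defensible given that roughly 80\% of SAVA households farm vanilla, and under it $S_0$ is null, so the lemma follows verbatim from the $S_+$ case.
\item[(b)] Keep \textbf{(A3)} as stated and prove the one-directional version that is actually used empirically. Whenever $\tau(x)\neq 0$ we must have $\delta(x)\neq 0$, hence $x\in S_+$, and then $\operatorname{sign}\{\gamma(x)\}=\operatorname{sign}\{\tau(x)\}$. In particular, a nonzero reduced-form estimate of either sign reveals the sign of the income effect. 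Only $\tau(x)=0$ is uninformative, being compatible with $\gamma(x)=0$ or with $\delta(x)=0$.
\end{itemize}

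Finally, I would note the aggregate consequence. If $\gamma$ has constant sign on $S_+$, then $\tau_t=\E[\gamma(X_i)\delta(X_i)\mid V_i=1]$ shares that sign, since $\delta\ge0$ acts as a nonnegative weight. This links the lemma to the pooled ATT estimates in the main text.
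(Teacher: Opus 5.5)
Your argument on $S_+=\{x:\delta(x)>0\}$ is exactly the paper's: multiplicativity of the sign function with $\operatorname{sign}\{\delta(x)\}=+1$. On $S_0=\{x:\delta(x)=0\}$, however, your proposal is correct and the paper's proof is not. The paper disposes of $S_0$ by asserting that it is null under the treated covariate distribution ``since $\E[\delta(X_i)\mid V_i=1]>0$.'' For nonnegative $\delta$, a positive mean only gives $\Pr(X_i\in S_+\mid V_i=1)>0$, which is what you derive. It does not give $\Pr(X_i\in S_0\mid V_i=1)=0$.

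A concrete counterexample: take treated $X_i$ uniform on $[0,1]$, $\delta(x)=\mathbf{1}\{x>1/2\}$ and $\gamma\equiv 1$. This satisfies \textbf{(A3)}, yet $\operatorname{sign}\{\tau(x)\}=0\neq 1=\operatorname{sign}\{\gamma(x)\}$ on a set of treated probability $1/2$. So, as you say, the ``almost every $x$'' conclusion genuinely needs more than \textbf{(A3)}.

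Both of your repairs are sound. Option (a) makes explicit the ``generic strict relevance'' that the paper's proof silently assumes; with it, the lemma follows verbatim from your $S_+$ case. Option (b) needs nothing beyond \textbf{(A3)}, and it holds for \emph{every} treated $x$, not just almost every one. It is also exactly what the paper's Interpretation paragraph actually uses ($\tau(x)<0\Rightarrow\gamma(x)<0$ and $\tau(x)>0\Rightarrow\gamma(x)>0$), so the empirical conclusions survive without strengthening the assumption.

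Your aggregate remark is correct too. The same nonnegative-weight argument also runs in the direction relevant to the pooled estimate: $\tau_t<0$ forces $\gamma<0$ on a subset of $S_+$ with positive treated probability.
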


\begin{proof}
By \textbf{(A3)}, $\delta(x)\ge 0$ for all $x$ in the treated support, with strict inequality holding generically (i.e., the set $\{x:\delta(x)=0\}$ has measure zero under the treated covariate distribution, since $\E[\delta(X_i)\mid V_i=1]>0$). Where $\delta(x)>0$, we have $\operatorname{sign}\{\gamma(x)\delta(x)\} = \operatorname{sign}\{\gamma(x)\}\cdot\operatorname{sign}\{\delta(x)\} = \operatorname{sign}\{\gamma(x)\}$. Where $\delta(x)=0$, $\tau(x)=0$ and the sign is uninformative, but this occurs on a set of measure zero.
\end{proof}

\paragraph{Interpretation.} Lemma~\ref{lem:sign} establishes that the sign of our reduced-form estimates is informative about the causal direction of income on deforestation. If $\tau(x)<0$---i.e., the price boom reduced deforestation in municipalities with covariates $x$---then $\gamma(x)<0$, meaning income causally reduces deforestation for those municipalities. Conversely, $\tau(x)>0$ implies $\gamma(x)>0$: income causally increases deforestation. This sign-identification result is the key link between our reduced-form estimates and the structural question of interest.

\subsection{Augmented synthetic control: definition and estimator}\label{sec:ASC}

We implement the augmented synthetic control (ASC) method of Ben-Michael et al. (2021) \cite{ben-michael_augmented_2021}. For each treated municipality $i$ with matched donor set $\mathcal{J}(i)\subset\{j:V_j=0\}$ (obtained via statistical matching; Section~\ref{sec:matching}), let
\[
\boldsymbol{\omega}_i = (\omega_{ij})_{j\in\mathcal{J}(i)}, \qquad \omega_{ij}\ge 0,\quad \sum_{j\in\mathcal{J}(i)}\omega_{ij}=1,
\]
be convex weights. Let $\mathcal{T}_0:=\{t_0,\ldots,t^\star-1\}$ and $\mathcal{T}_1:=\{t^\star,\ldots,t_1\}$ denote pre- and post-shock periods, with $T_0:=|\mathcal{T}_0|$ and $T_1:=|\mathcal{T}_1|$.

\paragraph{Step 1: Synthetic control weights.} For each treated unit $i$, choose $\boldsymbol{\omega}_i$ to minimize the pre-shock root mean squared prediction error (RMSPE) subject to approximate covariate balance:
\begin{equation}\label{eq:sc-weights}
\min_{\boldsymbol{\omega}_i}\;\sum_{t\in\mathcal{T}_0}\Bigl(Y_{i,t} - \sum_{j\in\mathcal{J}(i)}\omega_{ij}\,Y_{j,t}\Bigr)^{\!2}
\quad\text{s.t.}\quad \bigl\|X_i - \textstyle\sum_{j}\omega_{ij}\,X_j\bigr\| \le \eta,\;\;\omega_{ij}\ge 0,\;\;\textstyle\sum_j\omega_{ij}=1,
\end{equation}
with tolerance $\eta\ge 0$. When matching has already produced near-exact covariate balance, the constraint is approximately slack.

\paragraph{Step 2: Outcome-model augmentation.} For each $t\in\mathcal{T}_0$, fit a ridge regression on the donor units:
\begin{equation}\label{eq:ridge}
\widehat{m}_t \;=\; \arg\min_{m\in\mathcal{M}}\;\sum_{j\in\mathcal{J}(i)}\bigl(Y_{j,t} - m(X_j)\bigr)^2 \;+\; \lambda\,\|m\|_{\mathcal{M}}^2,
\end{equation}
where $\mathcal{M}$ is a low-complexity function class (we use ridge regression on $X$ with a small number of pre-period outcome lags) and $\lambda\ge 0$ is a regularization parameter. The \emph{augmented} counterfactual for unit $i$ at time $t$ is
\begin{equation}\label{eq:augmented-counterfactual}
\widehat{Y}_{i,t}^{(0)} \;=\; \underbrace{\sum_{j\in\mathcal{J}(i)}\omega_{ij}\,Y_{j,t}}_{\text{synthetic control}} \;+\; \underbrace{\widehat{m}_t(X_i) - \sum_{j\in\mathcal{J}(i)}\omega_{ij}\,\widehat{m}_t(X_j)}_{\text{bias correction}}.
\end{equation}
The first term reproduces the treated unit's latent-factor trajectory via donor weighting; the second corrects for residual covariate imbalance. When the synthetic control weights achieve exact pre-period fit, the bias correction is zero; when they do not, the ridge augmentation de-biases the estimate by adjusting for the covariate mismatch \citep{ben-michael_augmented_2021}.

\paragraph{Step 3: Effect estimation and aggregation.} For each treated unit $i$ and post-shock year $t\in\mathcal{T}_1$, the estimated unit--time effect is
\begin{equation}\label{eq:unit-effect}
\widehat{\tau}_{i,t} \;=\; Y_{i,t} \;-\; \widehat{Y}_{i,t}^{(0)}.
\end{equation}
We report two summaries:
\begin{enumerate}[label=(\roman*)]
\item \textbf{Pooled effect.} Aggregate all treated municipalities into a single unit by area-weighting their outcomes before fitting \eqref{eq:sc-weights}--\eqref{eq:augmented-counterfactual}. The resulting estimate $\widehat{\tau}_t^{\text{pool}}$ is reported with pointwise confidence intervals in Fig.~\ref{fig:augsynth}a.
\item \textbf{Municipality-level effects.} Fit \eqref{eq:sc-weights}--\eqref{eq:augmented-counterfactual} separately for each treated municipality, yielding $\{\widehat{\tau}_{i,t}\}_{i:V_i=1}$. The cross-municipality average, $\bar{\tau}_t := N_1^{-1}\sum_{i:V_i=1}\widehat{\tau}_{i,t}$, is reported alongside individual trajectories in Fig.~\ref{fig:augsynth}b.
\end{enumerate}

\paragraph{Consistency.} Under the IFE model~\eqref{eq:IFE} and assumptions \textbf{(A1)--(A5)}, with $T_0\to\infty$ and $|\mathcal{J}(i)|$ fixed, Ben-Michael et al. (2021) \cite{ben-michael_augmented_2021} show that the ASC estimator satisfies
\[
\bigl|\widehat{\tau}_{i,t} - \tau(X_i)\bigr| \;=\; O_p\!\Bigl(\frac{1}{\sqrt{T_0}}\Bigr) + \text{bias from pre-treatment imbalance},
\]
where the bias term is controlled by both the synthetic control fit and the ridge augmentation. In our application, $T_0=11$ pre-shock years (2004--2014) and each treated unit is matched to $K=5$ donors, yielding donor pools of modest size. The augmentation is therefore particularly valuable: it corrects residual imbalance that pure reweighting cannot eliminate with few donors.

\paragraph{Inference.} Pointwise confidence intervals follow the default procedure in \texttt{augsynth} \citep{ben-michael_augmented_2021}, which uses a Jackknife+ approach over the pre-treatment periods. For the pooled estimator, these intervals account for estimation uncertainty in both the synthetic control weights and the ridge augmentation.

\paragraph{Remarks.}
\begin{enumerate}[label=(\roman*)]
\item When matching already delivers near-exact balance on $X$, the augmentation term in \eqref{eq:augmented-counterfactual} is small but provides a finite-sample stability guarantee.
\item The function class $\mathcal{M}$ is deliberately low-dimensional (ridge on $X$ plus a small number of lagged outcomes) to avoid extrapolation beyond the convex hull of the donor pool.
\item The notation $\boldsymbol{\omega}_i$ for synthetic control weights is distinct from $W_{i,t}$ (latent income) throughout.
\end{enumerate}

\clearpage

\section{Supplementary Information}

\subsection{Principal Component Analysis}
\renewcommand{\thefigure}{S1}
\begin{figure}[!hbt]
    \centering
    \includegraphics[width=1\linewidth]{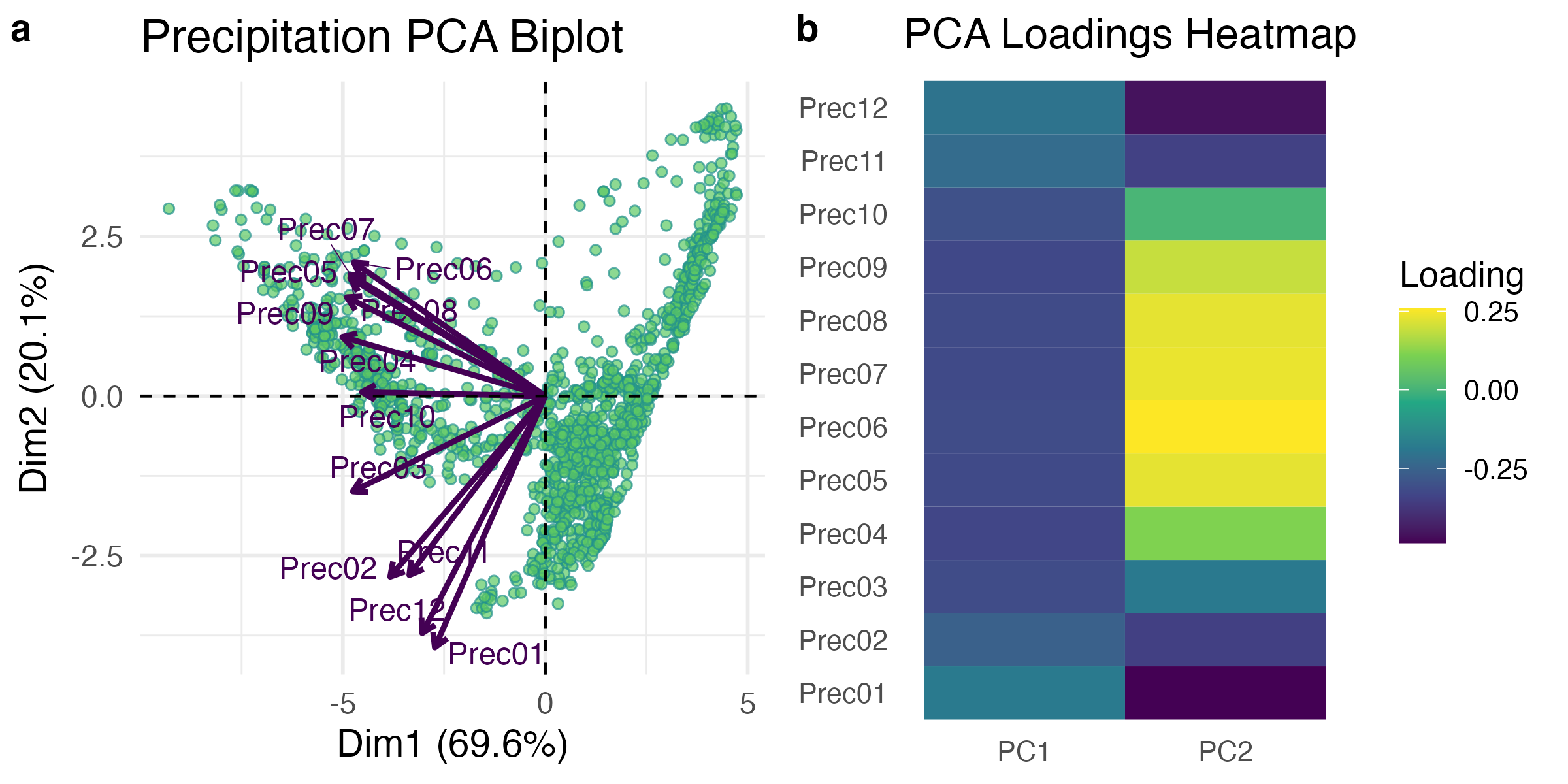}
    \caption{PCA results of historical precipitation across all municipalities: a) biplot of the first two principal components and b) a heatmap showing the loadings of the first two principal components.}
    \label{fig:pca}
\end{figure} 
\clearpage

\subsection{Robustness to Donor Pool Restrictions}

\renewcommand{\thefigure}{S2}
\begin{figure}[!hbt]
    \centering
    \includegraphics[width=1\linewidth]{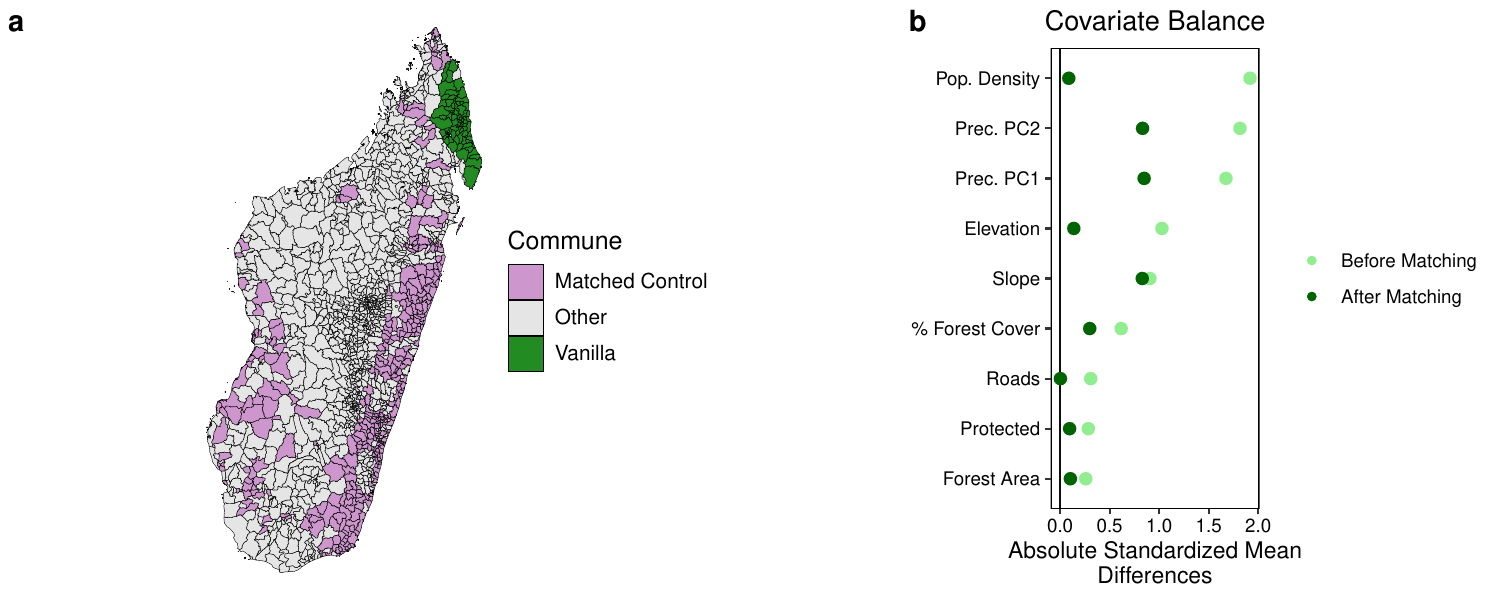}
    \caption{Statistical matching results that do not restrict the donor pool to exclude dry areas, showing: a) map of vanilla farming municipalities and matched controls and b) covariate balance pre- and post-matching. }
    \label{fig:matching_unrefined}
\end{figure} 

\renewcommand{\thefigure}{S3}
\begin{figure}[!hbt]
    \centering
    \includegraphics[width=0.8\linewidth]{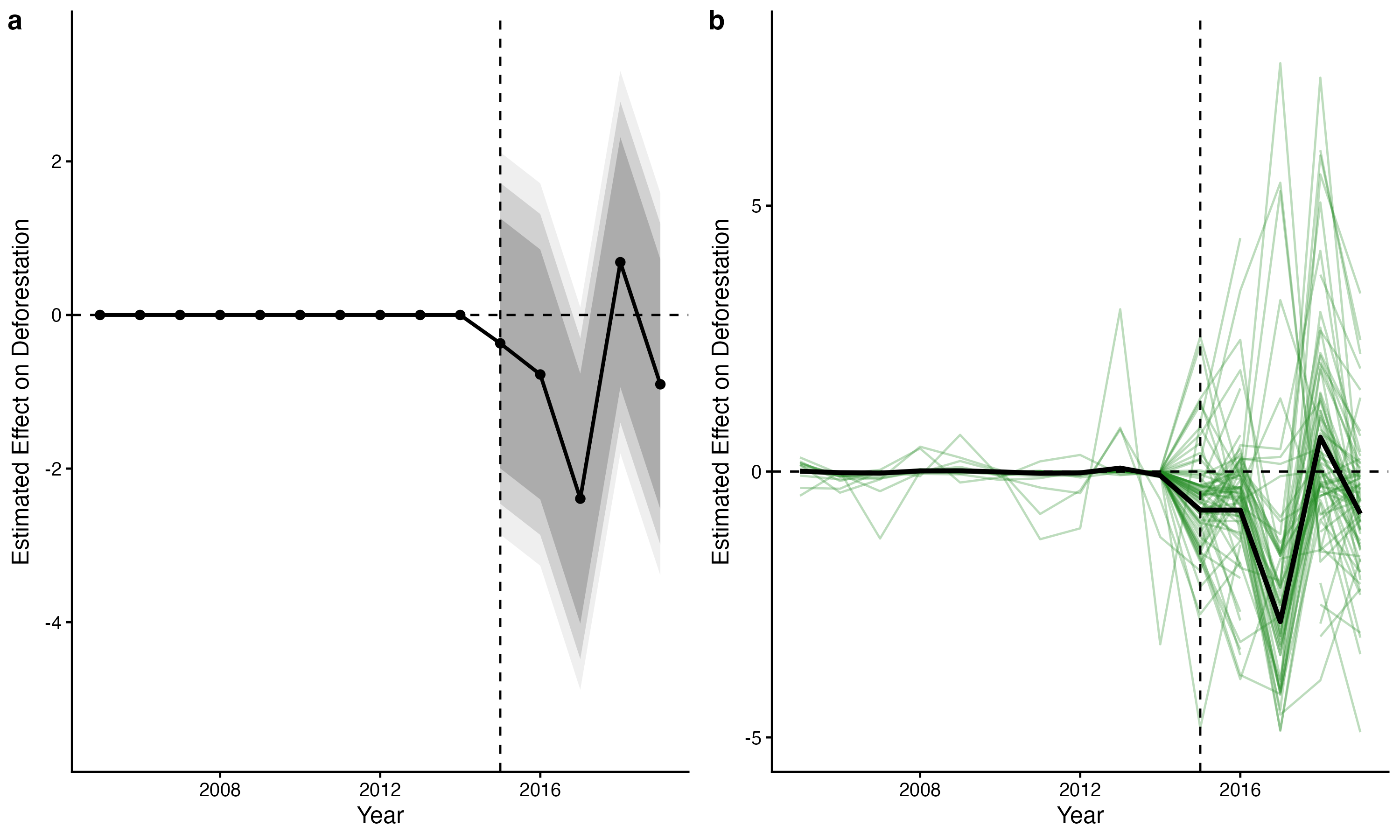}
    \caption{Augmented synthetic controls results based on the matched set where the donor pool was not restricted: a) the vanilla-producing municipalities considered together as one region and b) the vanilla-producing municipalities modeled separately. In panel (a), black dots represent estimates, and ribbons represent confidence intervals (light gray = 80\%, medium gray = 90\%, dark gray = 95\%). In panel (b), each green line represents a vanilla-producing municipality, and the solid black line represents the mean. Dashed lines indicate the treatment year (2015). Note that the y-axis is limited to -5.5 to 3.5 to improve visualization.}
    \label{fig:augsynth_unrefined}
\end{figure} 

\clearpage

\subsection{Sensitivity to the choice of Shock Year}

\renewcommand{\thefigure}{S4}
\begin{figure}[!hbt]
    \centering
    \includegraphics[width=1\linewidth]{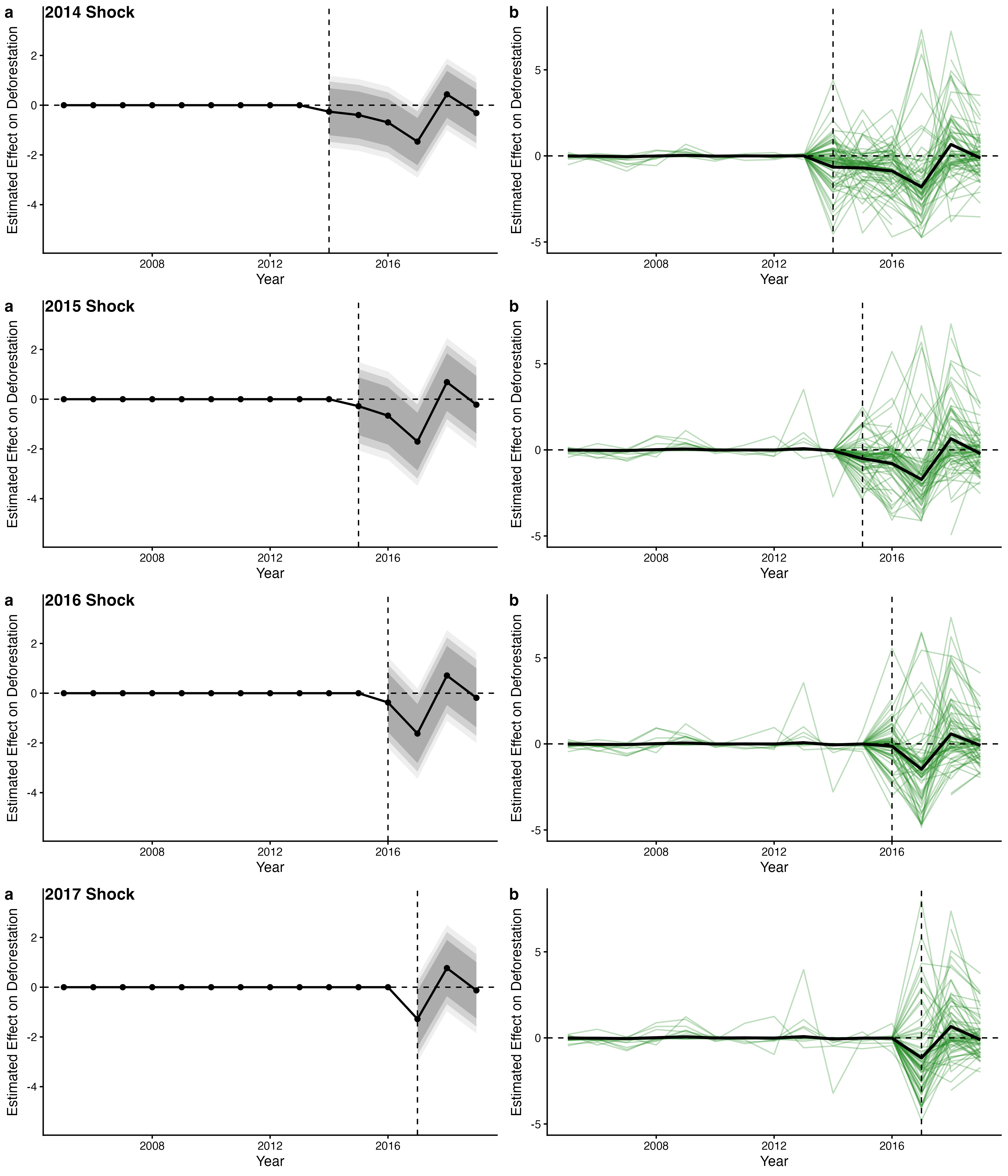}
    \caption{Augmented synthetic controls results for 2014, 2015, 2016, and 2017 for \textbf{K = 5}.}
    \label{fig:shock_year}
\end{figure} 

\clearpage

\subsection{Robustness to Number of Matches (K) in the Statistical Matching}

\renewcommand{\thefigure}{S5}
\begin{figure}[!hbt]
    \centering
    \includegraphics[width=1\linewidth]{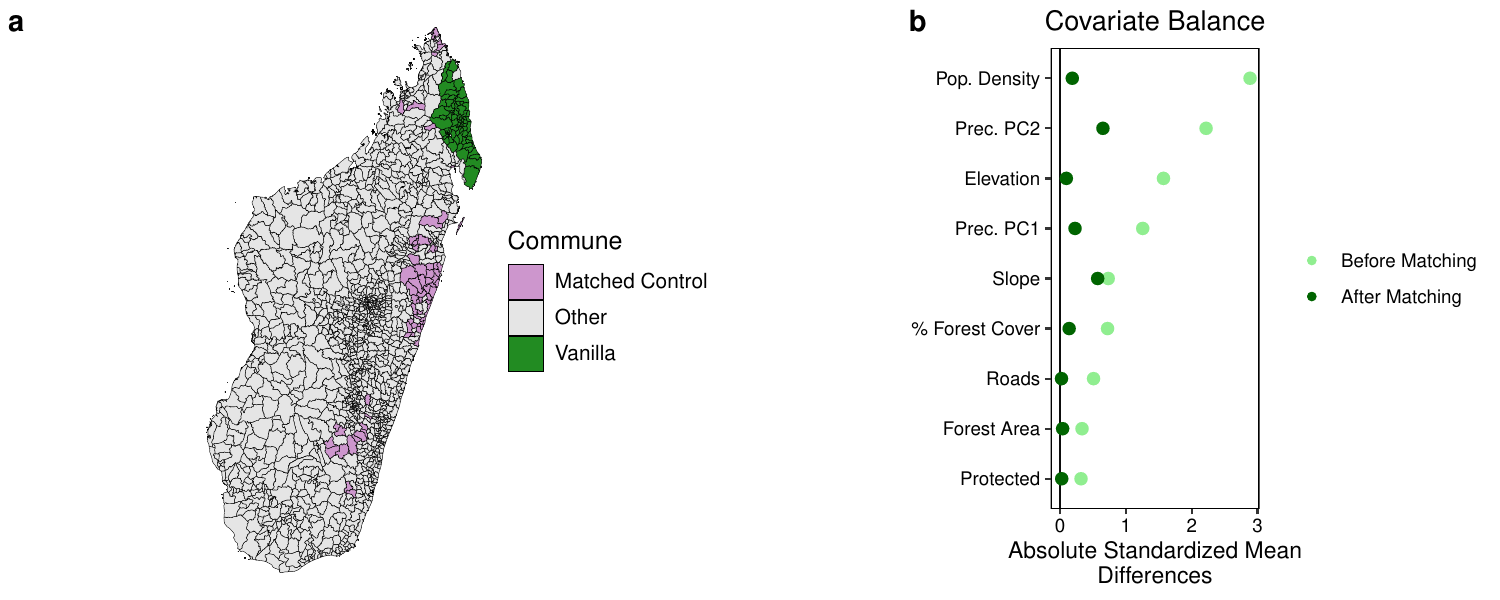}
    \caption{Statistical matching results for \textbf{K = 1}: (a) map of vanilla-farming municipalities and matched controls; (b) covariate balance before and after matching; and (c) mean deforestation in vanilla-producing municipalities (green) and matched controls (pink), alongside vanilla price (dashed black). Shaded ribbons denote 95\% confidence intervals.}
    \label{fig:matching_k1}
\end{figure}

\renewcommand{\thefigure}{S6}
\begin{figure}[!hbt]
    \centering
    \includegraphics[width=0.8\linewidth]{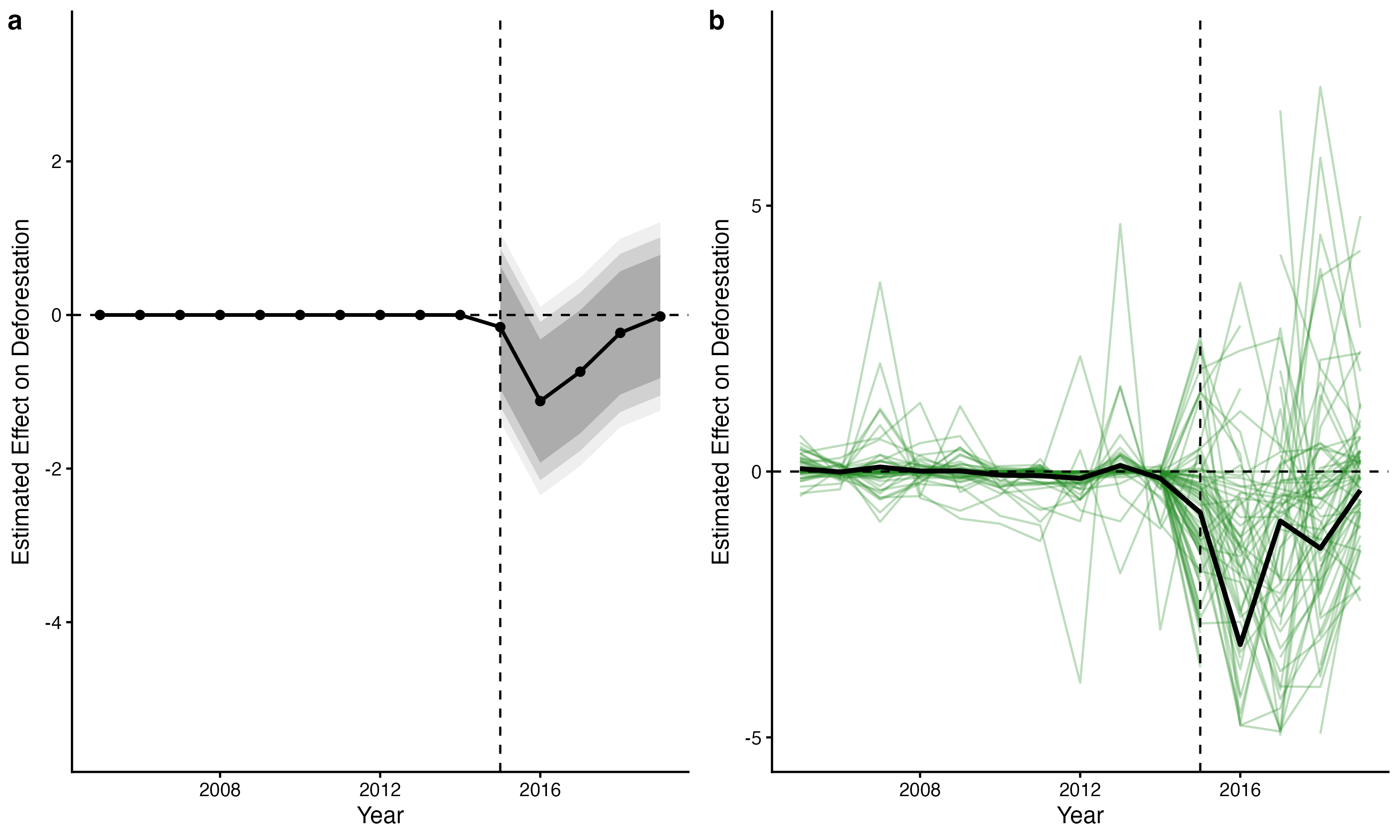}
    \caption{Augmented synthetic controls for \textbf{K = 1}: (a) treated municipalities pooled; (b) treated municipalities modeled separately. In (a), points show estimates with 80\%, 90\%, and 95\% confidence ribbons (light to dark). In (b), green lines are municipality-level effects; the solid black line is the cross-municipality mean. Vertical dashed line marks 2015 (treatment onset). The $y$-axis is truncated to $[-5.5,\,3.5]$~pp for clarity.}
    \label{fig:augsynth_k1}
\end{figure}

\subsection{Augmented Synthetic Control Model Weight Distribution}

\renewcommand{\thefigure}{S7}
\begin{figure}[!hbt]
    \centering
    \includegraphics[width=1\linewidth]{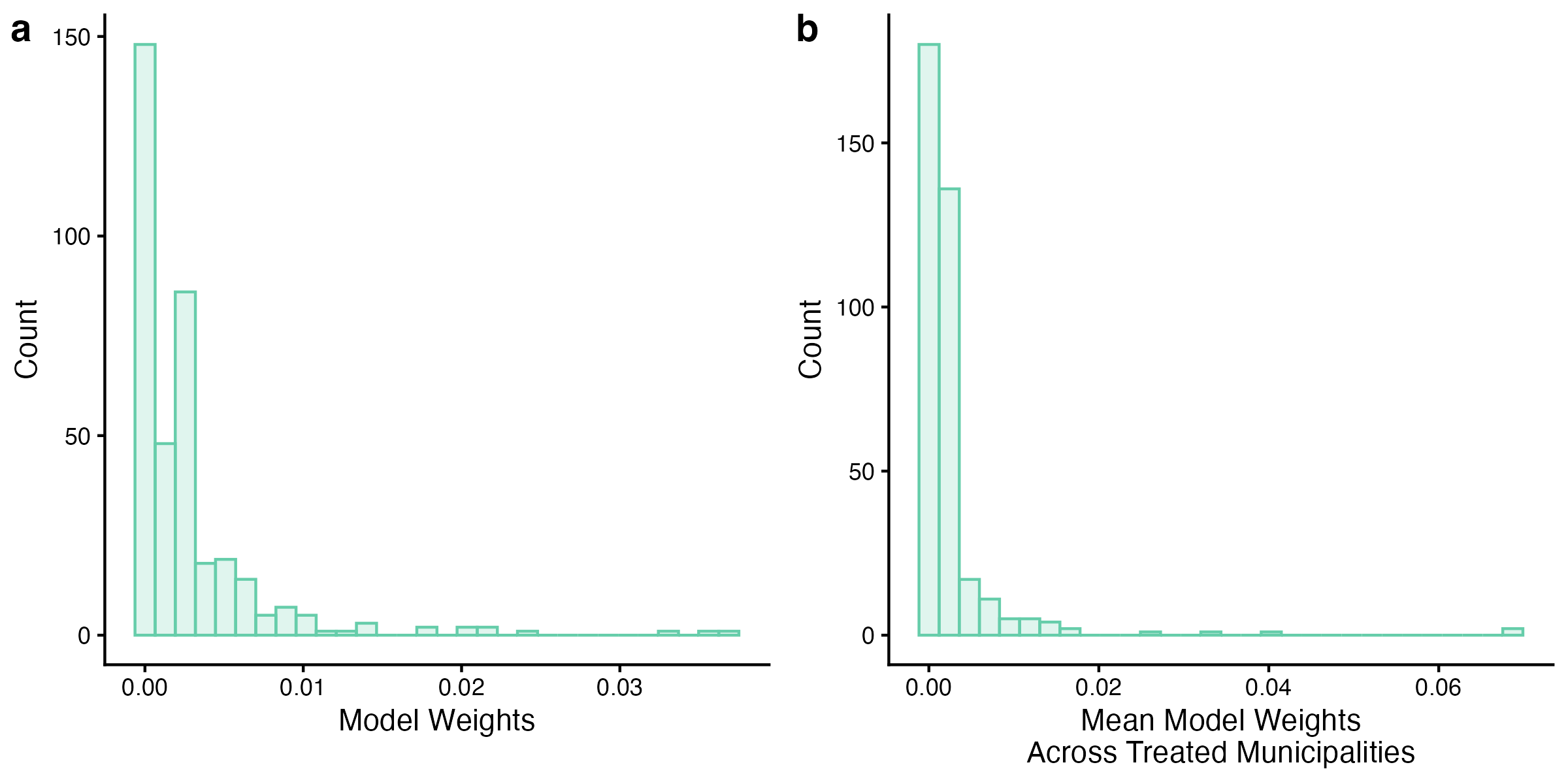}
    \caption{Model weights of augmented synthetic controls for the main analysis (\textbf{K = 5}): (a) treated municipalities pooled; (b) treated municipalities modeled separately. In (b), values are mean weights across treatment municipalities for visual clarity.}
    \label{fig:augsynth_weights}
\end{figure}

\end{document}